\newcounter{thm} \setcounter{thm}{0}
\newtheorem{Thm}[thm]{Theorem}
\newtheorem{Lem}[thm]{Lemma}
\newtheorem{Prop}[thm]{Proposition}
\newtheorem{Cor}[thm]{Corollary}
\newtheorem{Conj}[thm]{Conjecture}
\theoremstyle{nonumberplain}
\newtheorem{proof}{Proof}
\def \Z {\mathbb Z}
\def \d {\mathrm{d}}
\def \tr {\mathrm{Tr}}
\def \csch {\operatorname{csch}}
\begin{document}
\title{From Kontsevich-Witten to linear Hodge integrals via Virasoro operators}
\author{Gehao Wang}
\date{}
\maketitle

\abstract
We give a proof of Alexandrov's conjecture on a formula connecting the Kontsevich-Witten and Hodge tau-functions using only the Virasoro operators. This formula has been confirmed up to an unknown constant factor. In this paper, we show that this factor is indeed equal to one by investigating series expansions for the Lambert W function on different points. 

{\bf Keywords:} Lambert W function, tau-functions, Virasoro operators.

\section{Introduction}
The KP (Kadomtsev-Petviashvili) hierarchy is a completely integrable system of partial differential equations for a formal power series $F$ in variables $\{q_1,q_2,\dots\}$. It is a generalization of the KdV hierarchy, which can be viewed as a 2-reduced KP hierarchy. The exponent $\exp(F)$ for any solution is usually called the tau-function for the hierarchy. It is well-known that the set of all tau-functions for the KP hierarchy forms an orbit under the action of the so-called $\widehat{GL(\infty)}$ group. This group is constructed via the exponential map from the infinite dimensional Lie algebra $\widehat{\mathfrak{gl}(\infty)}$ elements (see, e.g.,\cite{DJKM},\cite{MJE}), and the action is interpreted as applying the exponential of differential operators belonging to this Lie algebra on tau-functions.

Among the fundamental operators from the algebra $\widehat{\mathfrak{gl}(\infty)}$, the Heisenberg and Virasoro operators are widely used when we construct relations between tau-functions. In this context, the Heisenberg operators $\alpha_n$ are in the form
\[
\alpha_n=
\begin{cases}
q_{-n} & n<0 \\
0 &  n=0 \\
n\frac{\partial}{\partial q_n} &  n>0.
\end{cases}
\]
and they span the Heisenberg algebra. The Virasoro operators $L_m, m\in \Z,$ are in the form
$$L_m=\sum_{k>0,k+m>0} (k+m)q_k\frac{\partial}{\partial q_{k+m}}+\frac{1}{2}\sum_{a+b=m} ab\frac{\partial^2}{\partial q_a\partial q_b}+\frac{1}{2}\sum_{i,j>0,i+j=-m}q_iq_j,$$
and they form a representation of the Virasoro algebra. In terms of their applications on connecting tau-functions, for example, Kazarian established the relation between the Hurwitz and Hodge tau-function in \cite{MK} using operators $L_m$ and $\alpha_n$ with $m,n<0$. Then, through the Hodge tau-function, this relation has been extended to the Kontsevich-Witten tau-function in \cite{AE} and \cite{LW} using $L_m$ and $\alpha_n$ with $m,n>0$, and furthermore to the partition function of r-spin numbers in \cite{DLM}. Recently, a formula connecting the Kontsevich-Witten and Brezin-Gross-Witten tau-function has been discussed in \cite{W1} using the $\widehat{\mathfrak{sl}_2}$ operators $\alpha_{2k+1}$ and $\widetilde{L}_{2m}$, which form a sub-algebra of $\widehat{\mathfrak{gl}(\infty)}$ that preserves the KdV hierarchy, where $\widetilde{L}_{2m}$ denotes the odd variable part of $L_{2m}$. Formulas involving such operators are particularly interesting not only because they preserve the KP (KdV) integrability, but also they have some practical properties themselves. The exponential of $\partial q_n$ behaves as a shift, and the exponential of the derivation part of $L_m$ behaves as a change of variables. This gives us a convenient way to describe what the action of such operator does on functions. In addition, operators $\alpha_n$ and $L_m$ usually appear in some nice commutator relations like
\begin{equation*}
[\frac{1}{n}\alpha_n,L_k]=\alpha_{n+k}, \quad, [L_m,L_n] =(m-n)L_{m+n}+\frac{1}{12}(m^3-m)\delta_{m+n,0}.
\end{equation*}
As one of the many examples, in \cite{GW}, these commutator relations are the crucial part in the proof of the equivalence relations among the Virasoro constraints, cut-and-join equation and polynomial recursion relation for the linear Hodge integrals using the main formula in \cite{LW}. 

In the paper \cite{A}, Alexandrov posted a conjecture stating that there exists a $\widehat{GL(\infty)}$ operator consisting of only the Virasoro operators that connects the Hodge tau-function $\tau_{Hodge}$ and Kontsevich tau-function $\tau_{KW}$. Later, in \cite{AE}, he reformulated this conjecture into a conjectural formula as the following.
\begin{Conj}[Alexandrov, \cite{AE}]\label{Conj1}
\begin{equation}\label{Conj1eq}
\tau_{KW}=\widehat{G}_{+}\tau_{Hodge},
\end{equation}
where
$$\widehat{G}_{+}=\beta^{-\frac{4}{3}L_0}\exp(\sum_{k=1}^{\infty}\widehat{a}_k\beta^{-k}L_k )\beta^{\frac{4}{3}L_0}.$$
The coefficients $\{\widehat{a}_k\}$ are determined by a series $f_{+}$,
\begin{equation}\label{def:f+}
f_{+}=\exp(\sum_{k=1}^{\infty} \widehat{a}_kz^{1-k}\frac{\partial}{\partial z})\cdot z,
\end{equation}
and $f_{+}$ is given as the solution of the following equation
\begin{equation}\label{f+}
\frac{f_{+}(z)}{1+f_{+}(z)}\exp(-\frac{f_{+}(z)}{1+f_{+}(z)})=E\exp(-E),
\end{equation}
where
$$E=1+\sqrt{(\frac{1}{1+f_{+}(z)})^2+\frac{4}{3z^3}}.$$
\end{Conj}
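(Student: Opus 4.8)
The plan is to treat the identity (\ref{Conj1eq}) as already known up to a multiplicative constant $C$, as supplied by the main formula of \cite{LW} and \cite{AE}, and to reduce Conjecture \ref{Conj1} to the single statement that $C=1$. I would therefore separate the argument into an \emph{identification} part, matching the operator coming from the change of variables $f_+$ with $\widehat G_+$, and a \emph{normalization} part, computing $C$; the genuinely new work lies in the latter.

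For the identification, the series (\ref{def:f+}) exhibits $\exp(\sum_k \widehat a_k z^{1-k}\partial_z)$ as a formal diffeomorphism near $z=\infty$ sending $z$ to $f_+(z)$, and $\exp(\sum_k \widehat a_k \beta^{-k}L_k)$ is the operator implementing the same change of variables on the time variables $q_k$. Using the commutator $[\frac{1}{n}\alpha_n,L_k]=\alpha_{n+k}$, I would compute the action of this operator on the Heisenberg modes and check that, after the $\beta^{\pm\frac{4}{3}L_0}$ conjugation (which merely rescales each $q_k$ by a power of $\beta$, since $L_0$ is the degree operator), $\widehat G_+\tau_{Hodge}$ realizes exactly the Hodge-to-Kontsevich substitution of times. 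The point to verify is that the $f_+$ forced by this substitution is the solution of (\ref{f+}); setting $y=1/(1+f_+)$ and $w=E-1=\sqrt{y^2+\frac{4}{3z^3}}$, equation (\ref{f+}) becomes $(1-y)e^{y}=(1+w)e^{-w}$, which is a Lambert W relation and fixes $f_+$, hence every $\widehat a_k$, as an explicit series in $z^{-1}$.

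Since the two sides of (\ref{Conj1eq}) are then tau-functions agreeing up to the scalar $C$, it suffices to match a single, conveniently chosen low-order coefficient, whose value is controlled by the leading asymptotics of $f_+$. Here the two branches of the Lambert W function become relevant: the equation $(1-y)e^{y}=(1+w)e^{-w}$ has the trivial root $w=-y$ on the principal branch and the root we actually need, $w>0$, on the non-principal branch, the two coinciding at the branch point where the square root in $E$ degenerates. The coefficients $\widehat a_k$ are read off from the expansion of $f_+$ at $z=\infty$, while the constant $C$ is most transparent from the expansion of the relevant Lambert W branch near this branch point.

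The hard part, and the real content of the paper, is the normalization step: the two expansions just described are taken at different points, so $C$ becomes visible only after transporting the behavior near the branch point to the expansion at $z=\infty$. Keeping track of the correct branch of the Lambert W function, controlling the square-root singularity, and showing that the transcendental contributions cancel to leave precisely $C=1$ is where the difficulty concentrates; it is the simultaneous handling of the two series expansions of the Lambert W function, rather than any single algebraic identity, that carries the argument.
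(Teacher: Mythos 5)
Your proposal has the right frame (take Alexandrov's result \eqref{Alexconj} as given up to the factor $C$, and prove $C=1$ using the branch structure of the Lambert W function), but the step you yourself call ``the real content'' is not an argument — it is a description of the difficulty. You propose to fix $C$ by ``matching a single, conveniently chosen low-order coefficient, whose value is controlled by the leading asymptotics of $f_+$.'' But $C(\beta)=1+\sum_{k\geq 1}c_k\beta^{2k}$ is a power series, and any single coefficient of $\widehat{G}_{+}\tau_{Hodge}$ — say its constant term — receives contributions from the second-derivative parts $\frac{1}{2}\sum_{a+b=k}ab\,\partial^2/\partial q_a\partial q_b$ of every $L_k$ acting on arbitrarily high-order terms of $\tau_{Hodge}$. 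At each order in $\beta$ this is an infinite sum with no closed form, and it is emphatically not controlled by the leading asymptotics of $f_+$; this is exactly why Alexandrov could only verify $C(\beta)=1+\mathcal{O}(\beta^{10})$ by direct computation. Asserting that ``the transcendental contributions cancel'' restates the conjecture rather than proving it. (A related flaw: your identification step claims $\widehat{G}_{+}\tau_{Hodge}$ ``realizes exactly the Hodge-to-Kontsevich substitution of times,'' but $\exp(\sum_k\widehat{a}_k\beta^{-k}L_k)$ is not a substitution operator — only the derivation part of $L_k$ acts as a change of variables; the quadratic-in-derivatives part does not.)

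The missing idea is the comparison that makes an all-orders statement possible: the paper plays Alexandrov's relation against the \emph{exact}, purely Virasoro formula of Liu--Wang, Corollary \ref{em} (Eq.~\eqref{emeq}), which connects the same two tau-functions in the opposite direction with no undetermined constant. Then $C(u)=1$ follows from the single operator identity \eqref{ak=em}, $\exp(-\sum_k\widehat{a}_ku^kL_k)=\exp(\sum_m e_mu^mL_m)$, which in turn is equivalent to an identity of formal series in one variable: the compositional inverse of $f_+$, namely $h(y(z))$ coming from $f_+=f_{+1}\circ f_{+2}$, coincides with the Liu--Wang series $\theta(f(z))$ (Proposition \ref{hy}). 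This is where the two Lambert W expansions actually enter: one rewrites $\theta(f(z))$ through the branch difference $K=\frac{1}{2}\bigl(W_0(-e^{-1-z^2/2})-W_{-1}(-e^{-1-z^2/2})\bigr)$, proves $y^{-1}(z)=K(f^{-1}(z))$ by exhibiting both as the unique admissible solution of $\frac{1}{1+z^{-1}}e^{-1/(1+z^{-1})}=Fe^{-F\coth F}\csch F$ (Lemma \ref{y=K}, with uniqueness pinned down by Karamata's series \eqref{re:c}), and finally checks that $H=(3F^2\coth F-3F)^{-1/3}$ is compatible with Alexandrov's defining equation \eqref{f+}. Note that in this route no coefficient of either tau-function is ever computed; your proposal, which stays at the level of tau-function coefficients and branch-point asymptotics, has no mechanism to reach a statement valid to all orders in $\beta$.
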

Alexandrov verified this formula up to a constant factor $C(\beta)$ with
$C(\beta)=1+\sum_{k=1}^{\infty}c_k\beta^{2k}$, and checked by direct computation that, at least, $C(\beta)=1+\mathcal{O}(\beta^{10})$. Comparatively, in \cite{LW}, the authors used a completely different approach to establish an explicit formula connecting these two tau-functions using Heisenberg and Virasoro operators, (see Theorem \ref{LWthm}), and then transformed this formula to another one involving only Virasoro operators as a corollary (see Eq.\eqref{emeq}). This corollary proves the conjecture in \cite{A}. However, as mentioned in \cite{LW}, it was not clear what the relation between the two formulas Eq.\eqref{Conj1eq} and Eq.\eqref{emeq} is.

The goal of this paper is to confirm that these two formulas are equivalent, and conclude that
\begin{Thm}
Conjecture \ref{Conj1} is true, that is, $C(\beta)=1$.
\end{Thm}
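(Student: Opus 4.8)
The plan is to prove that Alexandrov's formula \eqref{Conj1eq} and the Virasoro-only corollary \eqref{emeq} established in \cite{LW} describe \emph{literally the same} $\widehat{GL(\infty)}$ operator; once this is known, the operator in \eqref{emeq} already maps $\tau_{Hodge}$ to $\tau_{KW}$ with no spurious scalar, so the factor $C(\beta)$ relating it to $\widehat{G}_+$ must be $1$. Both operators have the conjugated shape $\beta^{-\frac{4}{3}L_0}\exp(\sum_{k\ge 1}\widehat a_k\beta^{-k}L_k)\beta^{\frac{4}{3}L_0}$, and the whole operator — including the quadratic $\frac{1}{2}\sum ab\,\partial_a\partial_b$ and the $\frac{1}{2}\sum q_iq_j$ pieces of each $L_m$ that could in principle contribute scalar corrections — is determined by the single coefficient sequence $\{\widehat a_k\}$. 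Hence it suffices to show that the sequences produced by the two formulas agree to all orders.

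For this I would use the standard dictionary between a Virasoro exponential and the formal change of variables it induces: by \eqref{def:f+}, the operator $\exp(\sum_k \widehat a_k z^{1-k}\partial_z)$ is the time-one flow of the vector field $\sum_k \widehat a_k z^{1-k}\partial_z$, and its value on $z$ is exactly $f_{+}$. Thus $\{\widehat a_k\}$ is recovered from $f_{+}$, and two operators of the above form coincide precisely when their associated functions coincide as formal series. The problem therefore reduces to comparing $f_{+}$, defined implicitly by \eqref{f+}, with the analogous function $f_{LW}$ read off from \eqref{emeq} through the same flow correspondence.

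The next step is to expose the Lambert W structure hidden in \eqref{f+}. Writing $u=\frac{f_{+}}{1+f_{+}}$, so that $\frac{1}{1+f_{+}}=1-u$ and $E=1+\sqrt{(1-u)^2+\frac{4}{3z^3}}$, equation \eqref{f+} becomes $u\,e^{-u}=E\,e^{-E}$. Since $t\mapsto te^{-t}$ takes a given value on exactly the two branches that coalesce at its critical point $t=1$, the pair $(u,E)$ are two Lambert W solutions sharing a common value and lying on opposite sides of $t=1$ (indeed $u\to 1^{-}$ and $E\to 1^{+}$ as $z\to\infty$); the square-root form of $E$ is precisely the branch-point expansion of the Lambert W function about $t=1$. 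On the \cite{LW} side I expect the same object to arise, but packaged through an expansion of the Lambert W function at a \emph{different} base point, which is exactly why the two formulas look unrelated despite describing one change of variables.

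The heart of the argument, and the main obstacle, is to prove that these two Lambert W expansions define the same germ, i.e. that $f_{+}=f_{LW}$ as formal power series in the relevant expansion parameter. Concretely I would (i) invert \eqref{f+} to obtain $f_{+}$ explicitly via the branch-point series of the Lambert W function, (ii) derive the corresponding series for $f_{LW}$ from \eqref{emeq}, and (iii) match them by transporting one Lambert W expansion to the base point of the other and checking that all coefficients agree. Alexandrov's direct verification that $C(\beta)=1+\mathcal{O}(\beta^{10})$ confirms the low-order agreement and serves as a consistency check; the task here is to upgrade this to an all-orders identity, which the Lambert W series manipulation supplies. With $f_{+}=f_{LW}$ established, the sequences $\{\widehat a_k\}$ coincide, the two operators are identical, and therefore $C(\beta)=1$.
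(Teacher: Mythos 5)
Your high-level strategy---identify each Virasoro exponential with the formal series it induces on $z$ via the flow correspondence, then compare the two series through their Lambert W structure---is the same as the paper's, and your substitution $u=f_{+}/(1+f_{+})$ turning \eqref{f+} into $u e^{-u}=E e^{-E}$ is genuinely the right entry point (it is Karamata's two-branch equation \eqref{eqn:mu}, which the paper exploits). However, your reduction contains a direction error that is fatal as stated. The operator $\widehat{G}_{+}$ in \eqref{Conj1eq} maps $\tau_{Hodge}$ to $\tau_{KW}$, while the operator $\exp(\sum_m e_m u^m L_m)$ in \eqref{emeq} maps $\tau_{KW}$ to $\tau_{Hodge}$---not the other way around, as you assert. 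Consequently the two operators cannot be ``literally the same''; for $C=1$ to hold they must be mutual \emph{inverses}. The correct reduction (the paper's Eq.~\eqref{ak=em}) is $\exp(-\sum_k \widehat{a}_k u^k L_k)=\exp(\sum_m e_m u^m L_m)$, which at the level of series says that the \emph{compositional inverse} of $f_{+}$ equals $\theta(f(z))$ (Eq.~\eqref{iden} and Proposition~\ref{hy}); the paper stresses exactly this point, that the two series, despite their different definitions, are inverses of each other. Your plan to match $f_{+}$ with $f_{LW}=\theta(f(z))$ coefficient by coefficient fails at the first nontrivial coefficient: from \eqref{f+1}--\eqref{f+2} one computes $f_{+}=z-\tfrac{2}{3}+O(z^{-1})$, whereas $\theta(f(z))=z+\tfrac{2}{3}+O(z^{-1})$. (Equivalently: if the operators were equal, combining \eqref{Conj1eq} with \eqref{emeq} would force the square of that operator to fix $\tau_{KW}$ up to a constant, which is false for this nontrivial operator.)

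The repair is to aim your Lambert W comparison at the inverse series, which is where the paper's real work lies: (a) express $\theta(f(z))$ through the half branch difference $K=\tfrac{1}{2}(W_0-W_{-1})$ of the Lambert W function evaluated at $-e^{-1-z^2/2}$, using the recursion \eqref{re}; (b) write $f_{+}^{-1}=h(y(z))$, where $h$ and $y$ are the inverses of the two factors in \eqref{f+1+2}; and (c) prove $y^{-1}(z)=K(f^{-1}(z))$ by showing both solve $\frac{1}{1+z^{-1}}e^{-\frac{1}{1+z^{-1}}}=Fe^{-F\coth F}\csch F$ and that this equation has a unique solution with the required leading coefficient---the uniqueness resting on Karamata's series for the branch difference, i.e.\ on distinguishing the branches $W_0$ and $W_{-1}$. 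Your step (iii), ``transporting one Lambert W expansion to the base point of the other,'' is a one-line placeholder for precisely this analysis; without the inversion fix and the uniqueness argument, the proposal does not yet constitute a proof.
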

Our method starts from studying the power series used to describe the corresponding formula. In fact, the series $f_{+}$ in Conjecture \ref{Conj1} and the series $\theta(f(z))$ used to describe the differential operators in Eq.\eqref{emeq} are both closely related to the {\bf Lambert W function}. As we will see later in our context, despite their complete different definitions, these two series are in fact inverses of each other. To show this, we need to investigate their relations to the Lambert W function $W(z)$. As a multi-valued function, the difference between two branches of $W(z)$ is actually nontrivial, even for real-valued $z$. This property of $W(z)$ plays a very important role in determining the relation between the two series $f_{+}$ and $\theta(f(z))$.

The present paper is organized as follows. In Sect.\ref{S2} we review the Kontsevich-Witten and Hodge tau-functions and some formulas connecting them. In Sect.\ref{S3} we introduce several series expansions for the Lambert W function and their properties. In Sect.\ref{S4}, we study the connection between the two series $f_{+}$ and $\theta(f(z))$, and prove Conjecture \ref{Conj1}.

\section{Formulas connecting the two tau-functions}\label{S2}
\subsection{The tau-functions}
Let $\overline{M}_{g,n}$ be the Deligne-Mumford compactification of the moduli space of complex stable curves of genus $g$ with $n$ marked points, and $\psi_i$ be the first Chern class of the cotangent line over $\overline{M}_{g,n}$ at the $i$th marked point. The intersections of the $\psi$-classes are evaluated by the integral:
$$<\tau_{d_1}\dots \tau_{d_n}>=\int_{\overline{M}_{g,n}}\psi_1^{d_1}\dots \psi_n^{d_n}.$$
The Kontsevich-Witten generating function (\cite{K},\cite{WE}) is defined as
$$F_K(t)=\sum <\tau_0^{k_0}\tau_1^{k_1}\dots>\frac{t_0^{k_0}}{k_0!}\frac{t_1^{k_1}}{k_1!}\dots.$$
And Kontsevich matrix model is the following matrix integral over the space of Hermitian matrices $\Phi$:
\begin{equation*}
Z_K=\frac{\int [\d\Phi] \exp\left(-\tr (\frac{\Phi^3}{6}+\frac{\Lambda\Phi^2}{2}) \right)}{\int [\d\Phi] \exp\left(-\tr \frac{\Lambda\Phi^2}{2} \right)},
\end{equation*}
where $\Lambda$ is the diagonal matrix. It gives us a representation $Z_K=\exp(F_K(t))$ under the Miwa parametrization
$$t_k=(2k-1)!!\tr \Lambda^{-2k-1}.$$
The Kontsevich-Witten tau-function is defined to be $\exp{(F_K(q))}$, where
$$F_K(q)=\left.F_K(t)\right\vert_{t_k=(2k-1)!!q_{2k+1}}.$$
It is well-known that $\exp{(F_K(q))}$ is a tau-function for the KdV hierarchy \cite{K}. 

Let $\lambda_j$ be the $j$th Chern class of the Hodge bundle over $\overline{M}_{g,n}$ whose fibers over each curve is the space of holomorphic one-forms on that curve. The linear Hodge integrals are the intersection numbers of the form
$$<\lambda_j\tau_{d_1}\dots \tau_{d_n}>=\int_{\overline{M}_{g,n}}\lambda_j\psi_1^{d_1}\dots \psi_n^{d_n}.$$
They are trivial when the numbers $j$ and $d_i$ do not satisfy the condition
\begin{equation*}
j+\sum_{i=1}^n d_i=dim(\overline{M}_{g,n})=3g-3+n.
\end{equation*}
The linear Hodge partition function is defined as
$$F_H(u,t)=\sum (-1)^j <\lambda_j \tau_0^{k_0}\tau_1^{k_1}\dots>u^{2j} \frac{t_0^{k_0}}{k_0!}\frac{t_1^{k_1}}{k_1!}\dots$$
where $u$ is the parameter marking the $\lambda$-class. It is easy to see that $F_H(0,t)=F_K(t)$. For $k\geq 0$, let
\begin{align*}
& D=(u+z)^2z\frac{\partial}{\partial z}, \quad D^{k}z=\sum_{j=1}^{2k+1} \alpha_{j}^{(k)}u^{2k+1-j}z^j,\\
& \widetilde{\phi_k}(u,q)=\sum_{j=1}^{2k+1} \alpha_{j}^{(k)} u^{2k+1-j}q_j.
\end{align*}
The Hodge tau-function is defined to be $\exp(F_H(u,q))$, where
$$F_H(u,q)=\left.F_H(u,t)\right\vert_{t_k=\widetilde{\phi_k}(u,q)},$$
and it is a tau-function for the KP hierarchy \cite{MK}.

\subsection{The formulas}
Here we review some known formulas that use differential operators to connect the two tau-functions. Using Mumford's theorem in \cite{Mu}, one can recover $F_H(u,t)$ from $F_K(t)$ using the following formula (cf.\cite{FP},\cite{AG1})
\begin{equation}\label{1}
\exp(F_H(u,t))=e^W\cdot \exp(F_K(t)),
\end{equation}
where
\begin{equation}\label{W}
W=-\sum_{k\geq 1} \frac{B_{2k}u^{2(2k-1)}}{2k(2k-1)}(\frac{\partial}{\partial t_{2k}}-\sum_{i\geq 0} t_i\frac{\partial}{\partial t_{i+2k-1}}+\frac{1}{2}\sum_{i+j=2k-2} (-1)^i\frac{\partial^2}{\partial t_i\partial t_j}).
\end{equation}
Here $B_{2k}$ is the Bernoulli numbers defined by:
$$\frac{t}{e^t-1}=\sum_{m=0}^{\infty} B_m\frac{t^m}{m!}.$$
However, the operator $W$ does not belong to the $\widehat{\mathfrak{gl}(\infty)}$ algebra. In the paper \cite{LW}, the authors introduced a method to decompose the operator $W$ into several factors using the Zassenhaus formula, and then transform the formula \eqref{1} into the following one that connects the two tau-functions $\exp(F_H(u,q))$ and $\exp(F_K(q))$ using the Virasoro and Heisenberg operators.
\begin{Thm}[Liu-Wang,\cite{LW}]\label{LWthm}
Define the series $f(z)$ to be
\begin{equation}\label{functionf}
f(z) = (-2\log(1-\frac{1}{1+z})-\frac{2}{1+z})^{-\frac{1}{2}}=z+\frac{2}{3}-\frac{1}{12}z^{-1}+\dots.
\end{equation}
Then the relation between $\exp(F_H(u,q))$ and $\exp(F_K(q))$ can be written as the following:
\begin{equation}\label{maineq}
\exp(F_H(u,q))=\exp\left(\sum_{m=1}^{\infty} a_mu^mL_m\right)\exp(P)\cdot\exp{(F_K(q))}.
\end{equation}
The coefficients $\{a_m\}$ can be computed from the equation
\begin{equation} \label{eqn:a}
\exp(\sum_{m>0}a_mz^{1-m}\frac{\partial}{\partial z})\cdot z=f(z).
\end{equation}
And $P$ is of the form
$$P=-\sum_{k=1}^{\infty} b_{2k+1}u^{2k}\frac{\partial}{\partial q_{2k+3}},$$
where the numbers $\{b_{2k+1}\}$ are uniquely determined by the recursion relation,
\begin{equation}\label{re}
(n+1)b_n=b_{n-1}-\sum_{k=2}^{n-1}kb_kb_{n+1-k}
\end{equation}
with $b_1=1, b_2=1/3$.
\end{Thm}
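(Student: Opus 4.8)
The plan is to begin from the Faber--Pandharipande formula \eqref{1}, $\exp(F_H(u,t))=e^W\exp(F_K(t))$, and to rewrite the single exponential $e^W$ as an ordered product of exponentials, each of which becomes a genuine $\widehat{\mathfrak{gl}(\infty)}$ operator once we pass from the $t$-variables to the $q$-variables. The reason $W$ itself is not admissible is visible in \eqref{W}: as an operator in the $t_i$ it superposes three families of different homogeneity --- the pure annihilation $\partial_{t_{2k}}$, the first-order derivation $\sum_i t_i\partial_{t_{i+2k-1}}$, and the double annihilation $\frac12\sum_{i+j=2k-2}(-1)^i\partial_{t_i}\partial_{t_j}$ --- all weighted by the Bernoulli coefficient $-B_{2k}u^{2(2k-1)}/(2k(2k-1))$. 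The strategy is to split $W$ into its derivation part, which will integrate to a Virasoro change of variables, and its purely annihilating parts, which will contribute only a central correction together with a single shift operator $P$; the splitting is performed with the Zassenhaus formula $e^{A+B}=e^Ae^Be^{-\frac12[A,B]}\cdots$.

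The computational core is to determine the Lie algebra generated by the three families and to show that the resulting Zassenhaus product collapses. First I would isolate the derivation part $B=-\sum_{k\ge1}\frac{B_{2k}u^{2(2k-1)}}{2k(2k-1)}\sum_i t_i\partial_{t_{i+2k-1}}$ and verify that its exponential implements exactly the change of variables relating the Kontsevich--Witten substitution $t_k=(2k-1)!!q_{2k+1}$ to the Hodge substitution $t_k=\widetilde{\phi_k}(u,q)$. The latter is itself a $u$-deformation of the former, generated by $D=(u+z)^2z\frac{\partial}{\partial z}$ on the spectral variable, so the content of this step is that the flow of $B$ on $z$ is governed by the generating series $\sum_k\frac{B_{2k}}{2k(2k-1)}x^{2k-1}$ --- the log-Gamma (Stirling) series --- and integrating that flow for unit time yields precisely $f(z)=(-2\log(1-\frac1{1+z})-\frac2{1+z})^{-1/2}$. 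Re-expressed in the $q$-variables this flow is carried by the first-order parts of the Virasoro operators, so it equals $\exp(\sum_m a_m u^m L_m)$ with the coefficients $a_m$ read off from the flow equation \eqref{eqn:a}.

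Next I would account for the remaining factors. After transporting the double-annihilation and pure-annihilation parts through the change of variables by the Zassenhaus commutators, and after the same passage to $q$, the second-order pieces must be shown to coincide with the second-order and central contributions already built into the $L_m$, while the first-order annihilation collapses into the single shift $P=-\sum_k b_{2k+1}u^{2k}\partial_{q_{2k+3}}$. The recursion \eqref{re}, with $b_1=1,b_2=1/3$, should appear as the consistency condition for this collapse: commuting $P$ back through the Virasoro flow imposes a quadratic relation on the generating series of the $b_n$, whose coefficient-by-coefficient reading is exactly $(n+1)b_n=b_{n-1}-\sum_{k=2}^{n-1}kb_kb_{n+1-k}$, and this pins down every $b_n$ from the two initial values.

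The principal obstacle is to control the Zassenhaus expansion, which a priori is an infinite product of exponentials of iterated commutators: one must show that, order by order in $u$, only finitely many factors contribute and that their combined effect reorganises without remainder into the two-factor form $\exp(\sum_m a_m u^m L_m)\exp(P)$ of \eqref{maineq}. Since the derivation part commutes with neither the shift nor the double-annihilation part, transporting the annihilating pieces past the flow generates a cascade of correction terms, and the delicate bookkeeping is to verify that the quadratic corrections are absorbed exactly into the $L_m$ and the linear ones exactly into $P$ through the recursion \eqref{re}. Matching the Bernoulli weights in \eqref{W} to the log-Gamma series that defines $f(z)$ --- and hence securing the precise values of the $a_m$ --- is the step where I expect the analysis to be most demanding.
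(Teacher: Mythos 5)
First, a point of comparison: the paper itself does not prove Theorem \ref{LWthm} --- it is imported from \cite{LW}, and the present paper records the strategy in a single sentence (decompose the operator $W$ of Eq.~\eqref{W} into factors via the Zassenhaus formula and thereby transform Eq.~\eqref{1}). Your overall architecture --- split $W$ into its derivation part and its annihilating parts, let the derivation integrate to a Virasoro-type change of variables, collapse the remainder into the single shift $P$, and use the grading in $u$ to make the Zassenhaus cascade finite order by order --- is the same architecture, so at that level you are on the route of \cite{LW}.

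There is, however, a concrete error in your central step. You claim that the derivation part $B$ of $W$ alone ``implements exactly the change of variables relating the Kontsevich--Witten substitution to the Hodge substitution'' and that integrating its flow for unit time ``yields precisely $f(z)$''. It does not. Eq.~\eqref{1} lives entirely in the $t$-variables, while Theorem \ref{LWthm} compares $\exp(F_H(u,q))$, defined through $t_k=\widetilde{\phi_k}(u,q)$, with $\exp(F_K(q))$, defined through $t_k=(2k-1)!!\,q_{2k+1}$. The discrepancy between these two substitutions is an independent change of variables, generated by $D=(u+z)^2z\frac{\partial}{\partial z}$, and the series $f$ encodes the \emph{composition} of that substitution flow with the Bernoulli flow of $B$, not the flow of $B$ alone. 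You can see the failure numerically: under the index correspondence $t_i\leftrightarrow q_{2i+1}$, the field $B$ produces only the modes $u^{2(2k-1)}L_{2(2k-1)}$ (so $m\equiv 2\pmod 4$, with even powers of $u$ only), and its time-one map is $z-\frac{1}{12}z^{-1}+O(z^{-3})$ with no constant term; but $f(z)=z+\frac{2}{3}-\frac{1}{12}z^{-1}+\cdots$ has constant term $\frac{2}{3}$, forcing $a_1\neq 0$ in \eqref{eqn:a} and odd powers of $u$ in \eqref{maineq}, which can only come from the $D$-generated substitution. Without inserting that extra flow, your Zassenhaus bookkeeping cannot close at any odd order in $u$. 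Relatedly, the coefficients $b_n$ do not arise as an abstract ``consistency condition for commuting $P$ through the flow'': as Sect.~\ref{S3} of the paper reviews, they are the coefficients of the Lambert W branch series $v$ solving $ve^{1-v}=e^{-z^2/2}$, and the recursion \eqref{re} is precisely the differential equation $v'(v-1)=zv$; producing this specific series from the Bernoulli weights (via the Stirling expansion of the Gamma function) is the genuinely hard identity in \cite{LW}, and your proposal leaves it unaddressed.
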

Furthermore, this theorem implies another formula that connects the two tau-functions using only the Virasoro operators
\begin{Cor}[\cite{LW}]\label{em}
Define the power series $\theta(z)$ to be
\begin{equation}\label{theta}
\theta(z)=\left(3\sum_{k=0}^{\infty} \frac{b_{2k+1}}{2k+3}z^{-2k-3}\right)^{-\frac{1}{3}}.
\end{equation}
Then
\begin{equation}\label{emeq}
\exp(F_H(u,q))=\exp\left(\sum_{m=1}^{\infty} e_mu^mL_m\right)\cdot\exp(F_K(q)),
\end{equation}
where the coefficients $\{e_m\}$ are determined by the equation 
\begin{equation}\label{def:em}
\exp(\sum_{m=1}^{\infty} e_{m}z^{1-m}\frac{\partial}{\partial z})\cdot z=\theta(f(z)).
\end{equation}
\end{Cor}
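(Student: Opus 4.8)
The plan is to derive the single-factor formula \eqref{emeq} from the two-factor formula \eqref{maineq} of Theorem \ref{LWthm} by absorbing the Heisenberg-type shift $\exp(P)$ into the Virasoro factor. First I would set up the dictionary between exponentials of the positive Virasoro operators and formal changes of the spectral variable $z$. For $m\geq 1$ the central term $\tfrac{1}{12}(m^3-m)\delta_{m+n,0}$ in $[L_m,L_n]$ never contributes, since it requires $m+n=0$; hence $\{L_m : m\geq 1\}$ represent the positive part of the Witt algebra faithfully, and the assignment $L_m\mapsto z^{1-m}\partial_z$ is a Lie-algebra homomorphism, the commutators $[z^{1-m}\partial_z,z^{1-n}\partial_z]=(m-n)z^{1-m-n}\partial_z$ matching those of the $L_m$. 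Consequently $\exp(\sum_m c_m u^m L_m)\mapsto \exp(\sum_m c_m u^m z^{1-m}\partial_z)$ is a group homomorphism whose action on the coordinate $z$ produces the near-identity change of variable recorded in \eqref{eqn:a} and \eqref{def:em}, and composition of such operators corresponds to composition of the associated flows. This reduces \eqref{emeq} to an identity about changes of variable, once the shift $\exp(P)$ is handled.

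The main obstacle is to show that, applied to $\exp(F_K(q))$, the shift $\exp(P)$ with $P=-\sum_{k\geq 1} b_{2k+1}u^{2k}\partial_{q_{2k+3}}$ coincides with a pure Virasoro operator $\exp(C)$, $C=\sum_{m\geq 1} c_m u^m L_m$, whose change of variable is exactly $z\mapsto\theta(z)$, that is $\exp(\sum_m c_m z^{1-m}\partial_z)\cdot z=\theta(z)$. A shift (affine in the $q$'s) and a Virasoro change of variable (whose derivation part is linear in the $q$'s) cannot agree as operators on all functions, so the argument must exploit the special structure of the Kontsevich--Witten tau-function, namely that $\exp(F_K(q))$ is a KdV tau-function depending only on the odd times $q_1,q_3,\dots$. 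This structure is also what forces $C$ to contain only even modes $L_{2m}$: by \eqref{theta} the series $\theta^{-3}$ is a series in $z^{-2}$, so $\theta(z)=z+O(z^{-1})$ involves only odd powers of $z$, its generating flow uses the parity-preserving fields $z^{1-2m}\partial_z$, and precisely the even $L_{2m}$ preserve the $2$-reduction; thus both sides of $\exp(P)\exp(F_K)=\exp(C)\exp(F_K)$ remain in the odd-variable subspace. I would prove this identity by converting the shifts $\partial_{q_{2k+3}}$ into change-of-variable form on $\exp(F_K)$ using the Virasoro constraints satisfied by $\tau_{KW}$, and then checking that the resulting coefficients reproduce $\theta$. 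Here the coefficients $b_{2k+1}$ shared by $P$ and by \eqref{theta}, governed by the recursion \eqref{re}, are exactly what make the two sides agree, while the leading term $b_1 z^{-3}$ of $\theta^{-3}$, which is absent from $P$, supplies the identity part $\theta(z)=z+O(z^{-1})$.

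With the shift converted, the remainder is formal. By the group homomorphism above, $\exp(\sum_m a_m u^m L_m)\exp(C)=\exp(\sum_m e_m u^m L_m)$ for a unique family $\{e_m\}$, and the change of variable of the product is the composite of the flows for $f$ (from \eqref{eqn:a}) and for $\theta$. Since $\exp(C)$ acts first and $\exp(\sum_m a_m u^m L_m)$ second, the composite is obtained by substituting $f$ into $\theta$, giving $\exp(\sum_m e_m z^{1-m}\partial_z)\cdot z=\theta(f(z))$, which is precisely the defining equation \eqref{def:em} for $\{e_m\}$. Substituting $\exp(P)\exp(F_K)=\exp(C)\exp(F_K)$ into \eqref{maineq} then yields \eqref{emeq}. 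The entire weight of the proof therefore rests on the shift-to-Virasoro conversion of the second paragraph; the dictionary and the composition step are routine consequences of the absence of a central extension in positive modes.
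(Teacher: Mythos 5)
Your proposal follows essentially the same route as the paper: it trades $\exp(P)$ in \eqref{maineq} for a pure even-mode Virasoro exponential by exploiting the Virasoro constraints \eqref{VforKW} of the Kontsevich--Witten tau-function, and then composes the two flows (in the correct order) to obtain $\theta(f(z))$ and hence \eqref{def:em}, exactly as the paper does. The one step you leave as a plan --- matching the shift coefficients $b_{2k+1}$ to the series $\theta$ --- is precisely the factorization $\exp\bigl(\sum_{m} l_m(L_{2m}-(2m+3)\frac{\partial}{\partial q_{2m+3}})\bigr)=\exp\bigl(\sum_{m} l_m L_{2m}\bigr)\exp\bigl(-\sum_{k} b_{2k+1}\frac{\partial}{\partial q_{2k+3}}\bigr)$ with $\exp(-\sum_m l_m z^{1-2m}\frac{\partial}{\partial z})\cdot z=\theta(z)$, which the paper itself displays and defers to \cite{LW}.
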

The above formula is obtained by first replacing the operator $P$ in Eq.\eqref{maineq} with the Virasoro operator $\sum_{m=1}^{\infty} -l_{m}u^{2m}L_{2m}$, where
$$\exp(-\sum_{m=1}^{\infty} l_{m}z^{1-2m}\frac{\partial}{\partial z})\cdot z=\theta(z).$$
This is due to the fact that the Kontsevich-Witten tau-function satisfies the Virasoro constraints
\begin{equation}\label{VforKW}
\left(L_{2m}-(2m+3)\frac{\partial}{\partial q_{2m+3}}\right)\cdot \exp(F_K(q))=0,
\end{equation}
and
\begin{align*}
&\exp\sum_{m=1}^{\infty} l_{m}\left(L_{2m}-(2m+3)\frac{\partial}{\partial q_{2m+3}}\right)\\
=&\exp(\sum_{m=1}^{\infty} l_{m}L_{2m})\exp(-\sum_{k=1}^{\infty} b_{2k+1}\frac{\partial}{\partial q_{2k+3}}).
\end{align*}
Then, since
$$\theta(f(z))=\exp(\sum_{m=1}^{\infty}a_mz^{1-m}\frac{\partial}{\partial z})\exp(-\sum_{m=1}^{\infty} l_{m}z^{1-2m}\frac{\partial}{\partial z})\cdot z,$$
we have
$$\exp\left(\sum_{m=1}^{\infty} e_mu^mL_m\right)=\exp\left(\sum_{m=1}^{\infty} a_mu^mL_m\right)\exp\left(\sum_{m=1}^{\infty} -l_{m}u^{2m}L_{2m}\right). $$

Using the Kac-Schwarz operators, Alexandrov established the relation Eq.\eqref{Conj1eq} up to an unknown factor $C$ in \cite{AE}. This relation can be written as the following formula connecting the two tau-functions $\exp(F_H(u,q))$ and $\exp(F_K(q))$ :
\begin{equation}\label{Alexconj}
\exp(F_K(q))=C(u) \exp\left(\sum_{k=1}^{\infty}\widehat{a}_ku^{k}L_k  \right) \cdot\exp(F_H(u,q)),
\end{equation}
where $u=\beta^{\frac{1}{3}}$. In fact, the function $f_{+}$ in Eq.\eqref{f+} is uniquely determined by the asymptotics for large $|z|$, and it can be expressed as the composition of $f_{+1}$ and $f_{+2}$
\begin{equation}\label{f+1+2}
f_{+}(z)=f_{+1}(f_{+2}(z)),
\end{equation}
where
\begin{equation}\label{f+1}
f_{+1}=\frac{1}{z\exp(z^{-1})\sinh(z^{-1})-1},
\end{equation}
and $f_{+2}$ satisfies the equation
\begin{equation}\label{f+2}
\frac{1}{(f_{+2})^2}\coth\left(\frac{1}{f_{+2}}\right)-\frac{1}{f_{+2}}=\frac{1}{3z^3}.
\end{equation}
Briefly speaking, Alexandrov's construction is based on the description of the tau-functions using Sato Grassmannian. By the Wick's theorem, one can obtain a set of basis vectors from the determinantal representation of the tau-function. This set of basis vectors defines a subspace $\mathcal{W}$ of an infinite dimensional Grassmanian. For the differential operator in Eq.\eqref{Alexconj}, let
\begin{equation*}
\widehat{V}_i=\sum_{k>0}\widehat{a}_k^{(i)}u^kL_k,
\end{equation*}
where
\begin{equation}\label{f+i}
\exp\left(\sum_{k>0}\widehat{a}_k^{(i)}z^{1+k}\frac{\d}{\d z}\right)\cdot z=f_{+i},
\end{equation}
then, we have
\begin{equation*}
\exp\left(\sum_{k=1}^{\infty}\widehat{a}_ku^{k}L_k  \right)=\exp(\widehat{V}_2)\exp(\widehat{V}_1).
\end{equation*}
Since the Kontsevich-Witten tau-function satisfies the Virasoro constraints \eqref{VforKW}, we have
\begin{equation*}
\exp(-\widehat{V}_2)\exp(F_K(q))=\exp(-\widehat{N})\exp(F_K(q)),
\end{equation*}
where
\begin{equation*}
\widehat{N}=\sum_{k\geq 2}\frac{2^{2k}B_{2k}}{(2k)!}u^{2k-2}\frac{(2k+1)\partial}{\partial q_{2k+1}}.
\end{equation*}
On the other hand, let
\begin{align*}
n(z)&=\sum_{k\geq 2}\frac{2^{2k}B_{2k}}{(2k)!}z^{-2k-1}=z^{-2}\coth(z^{-1})-z^{-1}-\frac{z^{-3}}{3},\\
v_1(z)&=\exp\left(\sum_{k>0}\widehat{a}_k^{(1)}u^kz^{1+k}\frac{\d}{\d z}\right)\cdot z,
\end{align*}
where $n(z)$ and $v_1(z)$ correspond to the operators $\widehat{N}$ and $\widehat{V}_1$ respectively. Then, for a suitable set of basis vectors $\{\Phi_k^{Hodge}\}$ of the Hodge tau-function, Alexandrov showed that the vectors
$$u^{k-1}\exp\left(u^{-3}n(u^{-1}z)\right)\exp(v_1)\Phi_k^{Hodge}, \quad k>0$$
all belong to the vector space of the Grassmannian for the Kontsevich-Witten tau-function. Since the action of the group element on the tau-function is equivalent to the action of corresponding operators from the algebra $w_{1+\infty}$ on the set of basis vectors, (see e.g., Section 1.4 in \cite{AE}), the two tau-functions should coincide up to a constant factor $C(u)$. This gives us Eq.\eqref{Alexconj}. But, at this stage, this approach can not determine the value of $C(u)$.

\section{Series for the Lambert W function}\label{S3}
In this section we review some properties of the Lambert W function and its related series. We refer the readers to \cite{CGHJK} and other related articles for further details. The Lambert W function $W(z)$ is defined to be the function satisfying
\begin{equation*}
W(z)e^{W(z)}=z.
\end{equation*}
Since the map $w\rightarrow we^w$ is not injective, the function $W(z)$ is multi-valued. For real-valued $W$, the function is defined in $z\geq -e^{-1}$. There is one branch for $z\geq 0$, and two branches in $-e^{-1}\leq z<0$. The requirement $W\geq -1$ gives us a single-valued function $W_0(z)$ in the domain $z\geq -e^{-1}$ with $W_0(-e^{-1})=-1, W_0(0)=0$ and $W_0(0)>0$ for $z>0$. The requirement $W\leq -1$ gives us the other branch $W_{-1}(z)$ in the domain $-e^{-1}\leq z<0$ with $W_{-1}(-e^{-1})=-1, W_{-1}(0^{-})=-\infty$. 

In \cite{L}, Lauwerier considered another independent variable 
\begin{equation}\label{def:p}
p=\frac{1}{2}(W_0(z)-W_{-1}(z)),
\end{equation}
which concerns the branch difference of $W(z)$ in the domain $-e^{-1}\leq z<0$. This definition immediately implies that
\begin{align}\label{W0p}
W_0(z)=-pe^p\csch p, \quad W_{-1}(z)=-pe^{-p}\csch p
\end{align}
with $p\geq 0$, and
\begin{equation}\label{eqn:p}
z=-p e^{-p\coth p}\csch p,
\end{equation}
where
\begin{equation*}
\coth p=\frac{\cosh p}{\sinh p}=\frac{e^{2p}+1}{e^{2p}-1},\quad \csch p=\frac{1}{\sinh p}=\frac{2e^p}{e^{2p}-1}
\end{equation*}
are the hyperbolic functions. 


In \cite{JK}, Karamata considered the solution $\mu$ of the equation
\begin{equation}\label{eqn:mu}
(1-x)e^{-(1-x)}=(1+\mu)e^{-(1+\mu)}.
\end{equation}
Clearly, $\mu=-x$ is one solution. In fact, the solutions can be expressed using the Lambert W function
\[
\mu=
\begin{cases}
-1-W_0(-(1-x)e^{-(1-x)}) \\
-1-W_{-1}(-(1-x)e^{-(1-x)}).
\end{cases}
\]
If $x>0$, then $W_0(-(1-x)e^{-(1-x)})=-1+x$, which gives us $\mu=-x$. And the expression for $W_{-1}$ is a power series. Suppose, in this case, $W_{-1}=-1-\mu$, where
$$\mu=\sum_{n=1}^{\infty}c_nx^n$$
with $c_1=1$. Then, after differentiating both sides of the Eq.\eqref{eqn:mu} with respect to $x$, we have
$$\mu\frac{\d \mu}{\d x}=(1+\mu)\frac{x}{x-1}$$
which can give us the recursion relation
\begin{equation}\label{re:c}
(n+1)c_n=2+\sum_{j=2}^{n-1}c_j(1-jc_{n-j+1})
\end{equation}
and $c_2=2/3$. If $x<0$, then it is $W_{-1}$ being $-1+x$ and $W_0$ being the power series defined above.

Next, we look at the following two series for the Lambert W function
\begin{equation*}
u=-W_0(-e^{-1-\frac{z^2}{2}}) \quad\mbox{and}\quad v=-W_{-1}(-e^{-1-\frac{z^2}{2}}).
\end{equation*}
The results from \cite{MM} indicate a connection between the above two series and the Stirling's approximation of $n!$. Let
\begin{equation}\label{v}
v=1+\sum_{i=1}^{\infty}b_iz^i
\end{equation}
with $z\geq 0$. If we differentiate the equation
\begin{equation*}
ve^{1-v}=e^{-\frac{1}{2}z^2}.
\end{equation*}
by $z$ on both sides, we have $v^{'}(v-1)=zv$, which gives us exactly the recursion relation \eqref{re} on the coefficients $\{b_i\}$. The series $u$ is another solution of the above equation with the form being
\begin{equation}\label{u}
u=1+\sum_{i=1}^{\infty} (-1)^ib_iz^i.
\end{equation}
Starting from the Euler integral of the second kind
\begin{equation*}
n!=\int_0^\infty  x^{n-1} e^{-x}\mathrm{d}x,
\end{equation*}
we can split the range of integration into
\begin{align*}
n!&=n^{n+1}e^{-n}\left\{\int_0^1 (ue^{1-u})^n \d u +\int_1^\infty (ve^{1-v})^n \d v\right\}\\
&=n^{n+1}e^{-n}\int_0^{\infty} ze^{-nz^2/2}\left( \frac{1}{1-v}-\frac{1}{1-u}\right)\d z\\
&=n^ne^{-n}\sqrt{2\pi n}\sum_{i=0}^{\infty}(2i+1)!!b_{2i+1}n^{-i}.
\end{align*}
This leads us to the Stirling's formula for $n!$. The series $u$ and $v$ play very important role in the proof of Theorem \ref{LWthm} in \cite{LW}, as well as our proof of Conjecture \ref{Conj1} in Sect.\ref{S4}.

\section{Proof of Alexandrov's conjecture}\label{S4}
Let us recall Eq.\eqref{emeq} and \eqref{Alexconj}. Note that in Eq.\eqref{Alexconj} the differential operator acts on $\exp(F_H(u,q))$. Here we re-arrange this equation as the following
\begin{equation}\label{Alexconj2}
C(u)\exp(F_H(u,q))=\exp\left(-\sum_{k=1}^{\infty}\widehat{a}_ku^{k}L_k  \right)\cdot \exp(F_K(q)).
\end{equation}
In this section, we prove Conjecture \ref{Conj1} by showing that 
\begin{equation}\label{ak=em}
\exp\left(-\sum_{k=1}^{\infty}\widehat{a}_ku^{k}L_k  \right)=\exp\left(\sum_{m=1}^{\infty} e_mu^mL_m\right).
\end{equation}
Then, by comparing Eq.\eqref{Alexconj2} with Eq.\eqref{emeq}, we can see that $C(u)$ must be $1$. And to prove the above equation, it is equivalent to show that the following two series are identical
\begin{equation}\label{iden}
\exp(-\sum_{k=1}^{\infty} \widehat{a}_kz^{1-k}\frac{\partial}{\partial z})\cdot z=\exp(\sum_{m=1}^{\infty} e_{m}z^{1-m}\frac{\partial}{\partial z})\cdot z.
\end{equation}
We would like to mention that, in Eq.\eqref{iden}, Proposition \ref{hy} and Lemma \ref{y=K}, the series on the left hand sides of the equations are from Alexandrov's results in \cite{AE} and the series on the right hand sides originate from Theorem \ref{LWthm}. Now let $y(z)$ and $h(z)$ be the inverse function of $f_{+1}$ (Eq.\eqref{f+1}) and $f_{+2}$ (Eq.\eqref{f+2}) respectively. 
Then the function $y(z)$ satisfies
\begin{equation}\label{def:y}
z=\frac{1}{y\exp(y^{-1})\sinh(y^{-1})-1},
\end{equation}
and $h(z)$ is
\begin{equation}\label{h}
h=\left(3z^{-2}\coth(z^{-1})-3z^{-1}\right)^{-\frac{1}{3}}.
\end{equation}
It is clear from Eq.\eqref{f+1+2} that the composition $h(y(z))$ is the inverse function of $f_{+}$, that is,
\begin{equation*}
h(y(z))=\exp(-\sum_{k=1}^{\infty} \widehat{a}_kz^{1-k}\frac{\partial}{\partial z})\cdot z.
\end{equation*}
And the right hand side of Eq.\eqref{iden} is the series $\theta(f(z))$ in Eq.\eqref{def:em}. Therefore, to prove Eq.\eqref{iden}, we need to prove that
\begin{Prop}\label{hy}
$$h(y(z))=\theta(f(z)).$$
\end{Prop}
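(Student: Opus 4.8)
My plan is to compute the cubes $h(y(z))^{-3}$ and $\theta(f(z))^{-3}$ separately, reduce each to a closed expression in a single auxiliary variable, and then show that these variables coincide. Both reductions pass through the Stirling-type series $u$ and $v$ of \eqref{u} and \eqref{v} and through the branch-difference variable $p$ of \eqref{def:p}.

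For the right-hand side, I first read off from \eqref{theta} that $\theta(f(z))^{-3}=3\sum_{k\ge0}\tfrac{b_{2k+1}}{2k+3}f(z)^{-2k-3}$. Since $\tfrac12(v-u)=\sum_{k\ge0}b_{2k+1}\sigma^{2k+1}$ is exactly the odd part of $v$, integrating term by term gives
\[
\theta(\xi)^{-3}=\tfrac32\int_0^{1/\xi}\sigma\bigl(v(\sigma)-u(\sigma)\bigr)\,\d\sigma .
\]
I would then evaluate this in closed form using the differential equation $\phi'(\phi-1)=\sigma\phi$ satisfied by both $\phi=u$ and $\phi=v$ (the relation behind \eqref{re}): it yields $\int_0^{\sigma_0}\sigma\phi\,\d\sigma=\tfrac12(\phi(\sigma_0)-1)^2$, whence
\[
\theta(\xi)^{-3}=\tfrac34\bigl[(v-1)^2-(u-1)^2\bigr]=\tfrac34(u+v-2)(v-u),
\]
with $u,v$ evaluated at $\sigma=1/\xi$. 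Taking $\xi=f(z)$, the crucial input is $u(1/f(z))=z/(z+1)$: writing $x=1/(1+z)$, the definition \eqref{functionf} gives $f(z)^{-2}=-2\log(1-x)-2x=x^2+\tfrac23x^3+\cdots$, while setting $\phi=u$ in $\phi e^{1-\phi}=e^{-\sigma^2/2}$ gives $\sigma^2=(1-u)^2+\tfrac23(1-u)^3+\cdots$; the two right-hand sides are the same increasing series, so inverting it forces $1-u(1/f(z))=x$.

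Next I bring in the variable $p$ attached to the argument $-e^{-1-\sigma^2/2}$ at $\sigma=1/f(z)$. From \eqref{W0p} (with the branch assignment making $p\ge0$, so that $u=-W_0$ and $v=-W_{-1}$) one obtains $v-u=2p$ and $u+v=2p\coth p$, and substitution into the formula above yields $\theta(f(z))^{-3}=3p\,(p\coth p-1)$. On the left-hand side, \eqref{h} gives directly, with $P:=1/y(z)$,
\[
h(y(z))^{-3}=3\bigl(P^2\coth P-P\bigr)=3P\,(P\coth P-1).
\]
It therefore remains to prove $P=p$. From \eqref{def:y} one gets $\tfrac{e^{2P}-1}{2}=P\,\tfrac{z+1}{z}$, while from $u=\tfrac{2p}{e^{2p}-1}=\tfrac{z}{z+1}$ (the value of $u$ just found, written through $p$) one gets the identical relation $e^{2p}=1+2p\,\tfrac{z+1}{z}$. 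Since this equation has a unique solution as a power series in $1/z$ with the solution $\sim 1/z$, we conclude $P=p$, hence $h(y(z))^{-3}=\theta(f(z))^{-3}$, and matching the leading term $z$ of both series gives $h(y(z))=\theta(f(z))$.

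The computational steps above are routine once the framework is in place; the genuinely delicate point—and the place where the multivaluedness emphasized in the introduction actually enters—is the bookkeeping of the two branches of $W$ in the step $v-u=2p$, $u+v=2p\coth p$, where the signs in \eqref{W0p} must be matched to the correct branch so that $p\ge0$ is consistent with $0<u\le1\le v$ and with $u=-W_0$, $v=-W_{-1}$. The second point requiring care is the series-inversion argument giving $u(1/f(z))=z/(z+1)$, which is precisely what links the Liu--Wang series $f$ to the Stirling series $u$; everything else follows from the closed-form evaluation of the integral and the uniqueness of the solution of $e^{2x}=1+2x\,\tfrac{z+1}{z}$.
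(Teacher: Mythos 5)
Your proposal is correct, and it mirrors the paper's two-step architecture — reduce both $\theta(f(z))^{-3}$ and $h(y(z))^{-3}$ to the hyperbolic normal form $3(X^{2}\coth X-X)$ and then identify the two auxiliary variables — but you execute both steps by genuinely different arguments. Your $p$ is exactly the paper's $K(f^{-1})$, so your closed form $\theta(f)^{-3}=3p^{2}\coth p-3p$ is the paper's first (unnamed) lemma and your final identity $P=p$ is its Lemma \ref{y=K}. For the first step, the paper differentiates the Lauwerier relation \eqref{Kz} with respect to $K$ and integrates the resulting $\csch^{2}$ term by parts, whereas you evaluate $\int_{0}^{\sigma_{0}}\sigma\phi\,\d\sigma=\tfrac{1}{2}(\phi(\sigma_{0})-1)^{2}$ directly from the ODE $\phi'(\phi-1)=\sigma\phi$ underlying \eqref{re} and then pass to hyperbolic form via $u=p\coth p-p$, $v=p\coth p+p$; this is shorter and avoids the integration by parts entirely. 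For the second step, the paper shows that $y^{-1}$ and $K(f^{-1})$ both solve $(1-x)e^{-(1-x)}=Fe^{-F\coth F}\csch F$ and extracts uniqueness from Karamata's two-branch analysis of \eqref{eqn:mu}; you instead prove the explicit evaluation $u(1/f(z))=z/(z+1)$ by series inversion, which replaces that transcendental equation by the simpler $e^{2X}=1+2X(1+z^{-1})$, whose uniqueness among series $\sim z^{-1}$ is elementary formal inversion. This confines the multivaluedness to two places you correctly flag: the branch assignment $u=-W_{0}$, $v=-W_{-1}$ giving $v-u=2p$, $u+v=2p\coth p$ (your signs are the right ones; note that \eqref{W0p} as printed in the paper has the two branches interchanged for $p\geq 0$), and the sign choice in the inversion step. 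On that last point, to make ``the same increasing series'' fully rigorous as a statement about formal series you should factor $G(w)=-2\log(1-w)-2w=g(w)^{2}$ with $g(w)=w+\tfrac{1}{3}w^{2}+\cdots$ composition-invertible, and observe that $g(x)$ and $g(1-u(1/f))$ both have leading coefficient $+1$ in $z^{-1}$, forcing the $+$ sign; this is routine, but it is precisely where the two solutions $u$ and $v$ of $\phi e^{1-\phi}=e^{-\sigma^{2}/2}$ get distinguished, so it should be said.
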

This proposition can be proved by using the next two lemmas. First we have
\begin{Lem}
For the numbers $\{b_{2i+1}\}$ determined by Eq.\eqref{re}, let
\begin{equation}\label{def:K}
K=\sum_{i=0}^{\infty}b_{2i+1}z^{2i+1}.
\end{equation}
Then
\begin{equation*}
\theta(f(z))=\left(3K^2(f^{-1})\coth K(f^{-1})-3K(f^{-1})\right)^{-\frac{1}{3}}.
\end{equation*}
\end{Lem}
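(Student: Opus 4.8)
The plan is to remove the outer $-\tfrac13$ power and the composition with $f$, reducing the lemma to a single formal power series identity, and then to prove that identity in two strokes using the Stirling series $u,v$ from Section~\ref{S3}.

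First I would unwind the definition \eqref{theta} of $\theta$. Since $\theta(w)^{-3}=3\sum_{k\ge 0}\frac{b_{2k+1}}{2k+3}w^{-2k-3}$, substituting $w=f(z)$ and writing $t=f^{-1}$ for the reciprocal $1/f(z)$ turns $f(z)^{-2k-3}$ into $t^{2k+3}$, so the lemma is equivalent to the formal power series identity
\begin{equation}\label{eq:heart}
G(t):=\sum_{k=0}^{\infty}\frac{b_{2k+1}}{2k+3}\,t^{2k+3}=K(t)^2\coth K(t)-K(t).
\end{equation}
Note that $G'(t)=t\sum_{k\ge 0}b_{2k+1}t^{2k+1}=tK(t)$ and $G(0)=0$; moreover, since $K(t)=t+\dots$ has no constant term, the right-hand side of \eqref{eq:heart} is a genuine power series in $t$ (the $1/K$ pole of $\coth K$ is killed by the factor $K^2$). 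It therefore suffices to check that both sides of \eqref{eq:heart} vanish at $t=0$ and have equal $t$-derivatives.

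The bridge to the Stirling series is to express the right-hand side of \eqref{eq:heart} through $u$ and $v$. By \eqref{def:K} together with $v(t)=1+\sum_i b_it^i$ and $u(t)=1+\sum_i(-1)^ib_it^i$, we have $K(t)=\tfrac12\big(v(t)-u(t)\big)$. On the other hand, from $u=-W_0(-e^{-1-t^2/2})$, $v=-W_{-1}(-e^{-1-t^2/2})$ and Lauwerier's parametrization \eqref{W0p}--\eqref{eqn:p} with $p=\tfrac12\big(W_0-W_{-1}\big)=K(t)$, the symmetric combination yields $u+v=2K\coth K$. Hence
$$K(t)^2\coth K(t)-K(t)=\Big(\tfrac12(v-u)\Big)\Big(\tfrac12(u+v)\Big)-\tfrac12(v-u)=\frac{v^2-u^2}{4}-\frac{v-u}{2}.$$

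Finally I would differentiate this last expression. Using the ODEs $v'(v-1)=tv$ and $u'(u-1)=tu$ obtained in Section~\ref{S3} by differentiating $ve^{1-v}=ue^{1-u}=e^{-t^2/2}$,
$$\frac{d}{dt}\left(\frac{v^2-u^2}{4}-\frac{v-u}{2}\right)=\frac{v'(v-1)-u'(u-1)}{2}=\frac{tv-tu}{2}=t\,K(t)=G'(t),$$
while at $t=0$ both sides of \eqref{eq:heart} are $0$ because $u(0)=v(0)=1$. This proves \eqref{eq:heart}, and substituting $t=f^{-1}$ gives the lemma. I expect the only real difficulty to be the middle step, namely establishing $u+v=2K\coth K$ with the correct branch assignment and signs for $W_0,W_{-1}$ in \eqref{W0p}; once that symmetric relation is secured, the derivative computation closing the argument is immediate.
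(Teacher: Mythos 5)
Your proof is correct, and while it shares the paper's starting point, the core computation is carried out by a genuinely different mechanism. Both you and the paper reduce the lemma to the power series identity $\sum_{i\geq 0}\frac{b_{2i+1}}{2i+3}t^{2i+3}=K(t)^2\coth K(t)-K(t)$, and both begin from the identification $K=\frac{1}{2}(v-u)=\frac{1}{2}\left(W_0-W_{-1}\right)$. From there the paper works in the variable $K$: it uses \eqref{def:p} and \eqref{eqn:p} to get Eq.~\eqref{Kz}, differentiates that relation with respect to $K$ to obtain $z\,\frac{\d z}{\d K}=-\frac{1}{K}\left(1-2K\coth K+K^2\csch^2 K\right)$, and then integrates back, using integration by parts on $\int K^2\csch^2 K\,\d K$ and an argument that the constant of integration vanishes, to land on $K^2\coth K-K$. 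You instead work in the variable $t$: you extract from Lauwerier's parametrization \eqref{W0p} the symmetric combination $u+v=2K\coth K$, rewrite the target as $\frac{v^2-u^2}{4}-\frac{v-u}{2}$, and match $t$-derivatives via the ODEs $v'(v-1)=tv$ and $u'(u-1)=tu$, i.e.\ precisely the equations that encode the recursion \eqref{re}. Your route buys a cleaner endgame — no integration by parts and no constant-of-integration discussion, since two power series with equal derivatives and equal value at $t=0$ coincide — and it makes transparent that the lemma is equivalent to the defining ODEs of $u$ and $v$. The paper's route has the side benefit that the intermediate relation \eqref{Kz} it establishes is reused verbatim in the proof of Lemma~\ref{y=K}. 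Finally, the difficulty you flag about branch assignment in \eqref{W0p} is immaterial: swapping the two branches sends $p\mapsto -p$, and both $p\coth p$ and the sum $W_0+W_{-1}=-2p\coth p$ are even in $p$, so $u+v=2K\coth K$ holds regardless of which branch carries which sign; all that matters is $p=\frac{1}{2}(v-u)=K$, which you verified.
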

\begin{proof}
For the two series $v$ and $u$ defined by Eq.\eqref{v} and \eqref{u} respectively, we have
\begin{equation*}
K=\frac{1}{2}\left(v-u\right)=\frac{1}{2}\left(W_0(-e^{-1-\frac{z^2}{2}})-W_{-1}(-e^{-1-\frac{z^2}{2}})\right).
\end{equation*}
By Eq.\eqref{def:p} and \eqref{eqn:p}, we have
\begin{equation}\label{Kz}
e^{-1-\frac{z^2}{2} }=K e^{-K\coth K}\csch K.
\end{equation}
Then, we differentiate the above equation by $K$ on both sides. This gives us 
\begin{equation*}
z\frac{\d z}{\d K}=-\frac{1}{K}\left(1-2K\coth K+K^2\csch^2 K \right).
\end{equation*}
By the definition of $K$ in Eq.\eqref{def:K}, we have
$$\int zK \d z=\sum_{i=0}^{\infty}\frac{b_{2i+1}}{2i+3}z^{2i+3}+C.$$
Therefore,
\begin{equation*}
-\int \left(1-2K\coth K+K^2\csch^2 K \right)\d K=\sum_{i=0}^{\infty}\frac{b_{2i+1}}{2i+3}z^{2i+3}+C.
\end{equation*}
We know that
\begin{equation*}
\frac{\d }{\d K}\coth K=-\csch^2 K.
\end{equation*}
Then
\begin{equation*}
\int K^2\csch^2 K \d K=-K^2\coth K+2\int K\coth K\d K.
\end{equation*}
Finally
\begin{align*}
&-\int \left(1-2K\coth K+K^2\csch^2 K \right)\d K  \\
=& -K+K^2\coth K+C.
\end{align*}
In our case, by the definition of $K$ in Eq.\eqref{def:K}, the constant $C$ must be $0$. This implies that
$$K^2\coth K-K=\sum_{i=0}^{\infty}\frac{b_{2i+1}}{2i+3}z^{2i+3}.$$
The lemma follows from the above equation and Eq.\eqref{theta}. 
\end{proof}

By Eq.\eqref{h}, we know
\begin{equation*}
h(y(z))=\left(3y^{-2}\coth(y^{-1})-3y^{-1}\right)^{-\frac{1}{3}}.
\end{equation*}
Therefore, to prove Proposition \ref{hy}, we need the identification between the following two series.
\begin{Lem}\label{y=K}
$$y^{-1}(z)=K(f^{-1}(z)).$$
\end{Lem}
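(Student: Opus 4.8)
The plan is to show that both $y^{-1}(z)$ and $K(f^{-1}(z))$ satisfy the same defining relation, exploiting the structure already uncovered for $K$ in the previous lemma. Recall that $y(z)$ is the inverse of $f_{+1}$, so by Eq.\eqref{def:y} we have $z = \left(y\exp(y^{-1})\sinh(y^{-1})-1\right)^{-1}$, which rearranges to $y\,\exp(y^{-1})\sinh(y^{-1}) = 1 + z^{-1}$. Setting $w = y^{-1}$, this becomes
\begin{equation*}
w^{-1}\exp(w)\sinh(w) = 1 + z^{-1}.
\end{equation*}
On the other side, abbreviate $s = f^{-1}(z)$ and $\kappa = K(s)$. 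Using the key identity from the preceding lemma, namely Eq.\eqref{Kz}, which reads $e^{-1-\frac{s^2}{2}} = \kappa\, e^{-\kappa\coth\kappa}\csch\kappa$, together with the definition of $f$ in Eq.\eqref{functionf}, I would eliminate the explicit $s$-dependence. The goal is to derive from these an equation for $\kappa$ in terms of $z$ that coincides with the displayed equation for $w$; the identification $w = \kappa$, i.e.\ $y^{-1}(z) = K(f^{-1}(z))$, then follows from uniqueness of the relevant power series solution.

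Concretely, the first step is to translate the relation $f(z)=\bigl(-2\log(1-(1+z)^{-1})-2(1+z)^{-1}\bigr)^{-1/2}$ into a statement about $f^{-1}$: if $s = f^{-1}(z)$ then $z = f(s)$, so $z^{-2} = -2\log\!\bigl(1-(1+s)^{-1}\bigr) - 2(1+s)^{-1}$. I expect this to connect to the Stirling-type series $v$ and $u$ of Eqs.\eqref{v}--\eqref{u} through the substitution already used in the proof of the previous lemma, where $K = \tfrac12(v-u)$ and the argument of the Lambert $W$ functions is $-e^{-1-z^2/2}$. The matching of arguments $-1-\tfrac{s^2}{2}$ on the $K$-side against the logarithmic expression coming from $f$ is the technical heart: I would substitute $\kappa = K(s)$ into Eq.\eqref{Kz}, take logarithms, and use Eq.\eqref{W0p} (which expresses $W_0, W_{-1}$ as $-pe^{\pm p}\csch p$ with $p = K$) to rewrite everything in hyperbolic functions of $\kappa$. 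This should produce an expression of the form $\kappa^{-1}\exp(\kappa)\sinh(\kappa) = 1 + (\text{something in } z)$, and the remaining task is to verify that the something is exactly $z^{-1}$.

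The main obstacle I anticipate is precisely this last bookkeeping: confirming that the combination $\kappa\,e^{-\kappa\coth\kappa}\csch\kappa = e^{-1-s^2/2}$ together with the $f$-relation for $s$ collapses to the clean form $\kappa^{-1}e^{\kappa}\sinh\kappa = 1+z^{-1}$. The difficulty is that $s$ enters both through the exponent $-1-\tfrac{s^2}{2}$ and through $z = f(s)$, so I must show these two occurrences are consistent, which amounts to checking that the logarithmic identity $-2\log(1-(1+s)^{-1}) - 2(1+s)^{-1} = s^{-2}$ is compatible with the hyperbolic rearrangement. I would handle this either by manipulating the closed-form relations directly, or, failing a clean cancellation, by comparing the power-series expansions order by order: since all series involved ($y^{-1}$, $K$, $f$, $f^{-1}$) have explicitly computable leading coefficients and satisfy first-order ODEs (the recursions Eq.\eqref{re} and Eq.\eqref{re:c}), showing that both sides satisfy the same differential equation with the same initial term would force equality. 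I regard the ODE-comparison route as the safer fallback if the direct closed-form identity proves stubborn.
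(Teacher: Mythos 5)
Your overall strategy --- showing that both series solve one and the same functional equation and then appealing to uniqueness of the power-series solution --- is exactly the strategy of the paper, and your handling of the $y$-side is correct: Eq.\eqref{def:y} gives $w^{-1}e^{w}\sinh(w)=1+z^{-1}$ for $w=1/y(z)$. But there is a genuine gap on the $K$-side, caused by a misreading of the notation. In this paper $y^{-1}$, $f^{-1}$, $f^{-2}$ denote \emph{reciprocals}, not functional inverses (note that you yourself read $y^{-1}$ as $1/y$ when using \eqref{def:y}, but then read $f^{-1}$ as the compositional inverse, which is inconsistent). The lemma asserts $1/y(z)=K\bigl(1/f(z)\bigr)$, an identity of power series in $z^{-1}$. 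Under your reading, with $s=f^{\langle-1\rangle}(z)=z-\tfrac23+\cdots$, the expression $K(s)$ is not even a well-defined formal series ($K(w)=\sum b_{2i+1}w^{2i+1}$ composed with a series having a constant term and a linear term gives divergent coefficient sums; analytically $K(s)\to\infty$ as $z\to\infty$ while $1/y(z)\to 0$), so the statement you set out to prove is false as you have parsed it.

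This is why the step you call the technical heart cannot close: you hope to match the exponent $-1-\tfrac{s^{2}}{2}$ of Eq.\eqref{Kz} evaluated at $s=f^{\langle-1\rangle}(z)$ against the logarithm coming from $f$, but with $z=f(s)$ the definition \eqref{functionf} produces
\begin{equation*}
\frac{1}{1+s^{-1}}e^{-\frac{1}{1+s^{-1}}}=e^{-1-\frac{1}{2}z^{-2}},
\end{equation*}
i.e.\ the exponent that appears is $-1-\tfrac12 z^{-2}$, not $-1-\tfrac{s^{2}}{2}$; what this actually identifies is $K$ evaluated at the reciprocal $1/f$. The repaired (and the paper's) argument is: substitute $w=1/f(z)$ into \eqref{Kz} to get $e^{-1-\frac12 f(z)^{-2}}=K(1/f)\,e^{-K(1/f)\coth K(1/f)}\csch K(1/f)$, observe from \eqref{functionf} that $e^{-1-\frac12 f(z)^{-2}}=\frac{1}{1+z^{-1}}e^{-\frac{1}{1+z^{-1}}}$, and check via the identity $\coth w-e^{-w}\csch w=1$ that $w=1/y(z)$ satisfies the same equation. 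One further caution on your uniqueness step: the common equation in the paper's form, $Fe^{-F\coth F}\csch F=\frac{1}{1+z^{-1}}e^{-\frac{1}{1+z^{-1}}}$, is even in $F$, so $\pm F$ both solve it and the paper must invoke the Lambert W branch difference \eqref{eqn:p} and Karamata's series \eqref{re:c} to pin down the solution with leading coefficient $1$. Had you brought the $K$-side to your non-even form $w^{-1}e^{w}\sinh w=1+z^{-1}$, uniqueness would follow at once by formal inversion of $w+\tfrac23w^{2}+\cdots=z^{-1}$ --- a genuinely simpler ending --- but as written the proposal aims this machinery at the wrong object.
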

\begin{proof}
First we show that both $y^{-1}(z)$ and $K(f^{-1}(z))$ are solutions of the same equation. From the definition \eqref{functionf} of $f$, we have
$$\frac{1}{1+z^{-1}}e^{-\frac{1}{1+z^{-1}}}=e^{-1- \frac{1}{2}f^{-2}}.$$
And by Eq.\eqref{def:y}, we have
\begin{align*}
\frac{1}{1+z^{-1}}e^{-\frac{1}{1+z^{-1}}} &= y^{-1}e^{-y^{-1}}\csch(y^{-1})\exp\left(-y^{-1}e^{-y^{-1}}\csch(y^{-1})\right)\\
                          &=y^{-1}\exp\left(-y^{-1}(1+e^{-y^{-1}}\csch(y^{-1}))\right)\csch(y^{-1})\\
                          &=y^{-1}e^{-y^{-1}\coth(y^{-1})}\csch(y^{-1}).
\end{align*}
On the other hand, by Eq.\eqref{Kz}, 
$$e^{-1-\frac{1}{2}f^{-2} }=K(f^{-1}) e^{-K(f^{-1})\coth(K(f^{-1}))}\csch (K(f^{-1})).$$
This shows that $y^{-1}(z)$ and $K(f^{-1}(z))$ are both the solution of
\begin{equation*}
\frac{1}{1+z^{-1}}e^{-\frac{1}{1+z^{-1}}}=Fe^{-F\coth(F)}\csch(F).
\end{equation*}

The uniqueness of such solution $F$ follows from the property of $p$ in \eqref{eqn:p}. Note that both $y^{-1}(z)$ and $K(f^{-1}(z))$ are in variables $z^{-1}$. In fact, if we let
$$x=(1+z)^{-1}=\sum_{i=1}^{\infty}(-1)^{i-1}z^{-i},$$
with $|z^{-1}|<1$, then $1-x>0$ and $0<(1-x)e^{-(1-x)}\leq e^{-1}$. By considering
\begin{equation*}
-(1-x)e^{-(1-x)}=-Fe^{-F\coth(F)}\csch(F),
\end{equation*}
we obtain
\begin{equation}\label{eqn:F}
F=\frac{1}{2}\left(W_0(-(1-x)e^{-(1-x)})-W_{-1}(-(1-x)e^{-(1-x)}) \right).
\end{equation}
Since we require the coefficient of $z^{-1}$ in the series $F$ to be $1$, from Eq.\eqref{eqn:mu} and its properties introduced in Sect.\ref{S3}, we require that $x>0$. Hence
\begin{equation}\label{def:F}
F=\frac{1}{2}(-1+x+1+\mu)=x+\frac{1}{2}\sum_{n=2}^{\infty}c_nx^n,
\end{equation}
where $c_n$ are determined by Eq.\eqref{re:c}. And the proof is completed.
\end{proof}

Finally, as a completion of proving Conjecture \ref{Conj1}, we show that the function
\begin{equation}\label{def:H}
H=(3F^2\coth F-3F)^{-\frac{1}{3}}
\end{equation}
satisfies
\begin{equation}\label{eqn:E}
(1-x)e^{-(1-x)}=Ee^{-E},
\end{equation}
where
$$E=1+\sqrt{x^2+\frac{4}{3H^3}}.$$
Eq.\eqref{eqn:E} corresponds to Eq.\eqref{f+} with $x=(1+f_{+})^{-1}$ and $H=z$. For Eq.\eqref{eqn:F}, here we write $F=\frac{1}{2}(W_0-W_{-1})$ for simplicity. By Eq.\eqref{eqn:F}, \eqref{def:F} and \eqref{def:H}, we have
\begin{align*}
H &= \left(3F^2\frac{\exp(W_0-W_{-1})+1}{\exp(W_0-W_{-1})-1}-3F\right)^{-\frac{1}{3}}\\
&= \left(3F^2\frac{W_{-1}+W_0}{W_{-1}-W_0}-3F\right)^{-\frac{1}{3}}\\
&=\left(\frac{3}{4}(\sum_{n=1}^{\infty}c_nx^n)^2-\frac{3}{4}x^2\right)^{-\frac{1}{3}}.
\end{align*}
This implies that $\sqrt{x^2+\frac{4}{3H^3}}$ satisfies Eq.\eqref{eqn:mu}. Hence $H$ satisfies Eq.\eqref{eqn:E}. And we have completed the proof of Conjecture \ref{Conj1}.

\vspace{20pt}

\noindent
{\bf Remark:}  In \cite{AE} the following series was considered in the Kac-Schwarz description of the Hodge tau-function:
\begin{equation*}
\eta=\frac{e^q\sinh(q)}{q}-1.
\end{equation*}
And $q$ can be expressed as
\begin{equation*}
q=\frac{S_{-}-S_{+}}{2}, 
\end{equation*}
where $S_{\pm}$ are the two solutions of the equation
\begin{equation*}
Se^{-S}=\frac{1}{1+\eta}e^{-\frac{1}{1+\eta}}.
\end{equation*}
Here $S_{+}$ represents $(1+\eta)^{-1}=\sum_{k=0}^{\infty}(-1)^k\eta^k$. The function $y$ in Eq.\eqref{def:y} is defined as the inverse function of $f_{+1}$. The relation between them are $\eta=z^{-1}$ and $q=y^{-1}$. On the other hand, the function $p$ in Eq.\eqref{eqn:p} is defined as a branch difference of the Lambert W function, where $W_0(z)$, usually called as the principal branch, can be expressed as the Taylor series
$$W_0(z)=\sum_{n=1}^{\infty}\frac{(-n)^{n-1}}{n!}z^n.$$
The series $S$ and $W$ are clearly different, but they are related as
\begin{equation*}
S(\eta)=-W\left(-\frac{1}{1+\eta}e^{-\frac{1}{1+\eta}}\right).
\end{equation*}
We can see that $S$ can also be seen as a series for the Lambert W function. 

\section*{Acknowledgement}

The author would like to thank the referee(s) for comments and suggestions in improving this paper's presentation. Research of the author is supported by the National Natural Science Foundation of China (Grant No.11701587).

\vspace{10pt} \noindent
\\
\footnotesize{\sc gehao wang: 
school of mathematics (zhuhai), sun yat-sen university, zhuhai, china. }\\
\footnotesize{E-mail address:  gehao\_wang@hotmail.com}

\end{document}